\newtheorem{theorem}{Theorem}[]
\newcommand{\R}{\mathbb{R}}
\DeclareMathOperator*{\rank}{rank}
\renewcommand{\vec}[1]{\ensuremath{\mathbf{#1}}}
\newcommand{\vecs}[1]{\ensuremath{\mathbf{\boldsymbol{#1}}}}
\newcommand{\mat}[1]{\ensuremath{\mathbf{#1}}}
\newcommand{\A}{\mat{A}}
\renewcommand{\P}{\mat{P}}
\renewcommand{\S}{\mat{S}}
\renewcommand{\H}{\mat{H}}
\newcommand{\x}{\vec{x}}
\newcommand{\Scal}{{\mathcal{S}}}
\newcommand{\Pcal}{{\mathcal{P}}}
\begin{document}

%

%

\twocolumn[

\aistatstitle{Nonlinear Weighted Finite Automata}

\aistatsauthor{Tianyu Li \And Guillaume Rabusseau \And Doina Preup}

\aistatsaddress{ McGill University \And McGill University \And McGill University} ]

\begin{abstract}
Weighted finite automata~(WFA) can expressively model functions defined over strings but are inherently linear models.
Given the recent successes of nonlinear models in machine learning, it is natural to wonder whether extending WFA to the nonlinear
setting would be beneficial.
In this paper, we propose a novel model of
neural network based nonlinear WFA model (NL-WFA) along with a learning algorithm.
Our learning algorithm is inspired by the \emph{spectral learning} algorithm for WFA and relies
on a nonlinear decomposition of the so-called Hankel matrix, by means of an auto-encoder network. 
The expressive power of NL-WFA and the proposed learning algorithm are assessed on both synthetic and real world data, showing 
that NL-WFA can lead to smaller model sizes and infer complex grammatical structures from data.
\end{abstract}

\section{Introduction}
Many tasks in natural language processing, computational biology or reinforcement learning, rely on estimating  functions
mapping sequences of observations to real numbers.
\emph{Weighted finite automata (WFA)} are finite state machines that allow one to succinctly represent such functions.
WFA have been widely used in many fields such as grammatical parsing~\citep{mohri1998dynamic}, sequence modeling and prediction~\citep{cortes2004rational} and bioinfomatics~\citep{allauzen2008sequence}.
A \emph{probabilistic WFA (PFA)} is a WFA satisfying some constraints that computes a probability distribution over strings;
PFA are expressively equivalent to \emph{Hidden Markov Models (HMM)}~\citep{dupont2005links}, which have been successfully applied in many tasks such as speech 
recognition~\citep{gales2008application} and human activity recognition~\citep{nazabal2015discriminative}.
Recently, the so-called \emph{spectral method} has been proposed as an alternative to EM based algorithms to learn
HMM~\citep{hsuspectral}, WFA~\citep{bailly2009grammatical}, predictive state representations~\citep{boots2011closing}, and related models. Compared to EM based methods, the spectral method
has the benefits of providing consistent estimators and reducing computational complexity.

Although WFA have been successfully applied in various areas of machine learning, they are inherently linear models: their computation boils down
to the composition of linear maps. Recent positive results in machine learning have shown that models based on composing nonlinear functions are both very
expressive and able to capture complex structure in data.
 For example, by leveraging the expressive power of deep convolutional neural networks in the context of reinforcement learning, agents can be trained to 
outperform humans in Atari games~\citep{mnih2013playing} or to defeat world-class go players~\citep{silver2016mastering}. Deep convolutional networks have also
recently led to considerable breakthroughs in computer vision~\citep{krizhevsky2012imagenet}, where they showed their ability to disentangle the complex structure 
of the data by learning a representation which
unfold the original complex feature space~(where the data lies on a low-dimensional manifold) into a representation space where the structure has been linearized.
It is thus natural to wonder to which extent introducing non-linearity in WFA could be beneficial. We will show that both these advantages of nonlinear models, namely
their expressiveness and their ability to learn rich representations, can be brought to the classical WFA computational model.

In this paper, we propose a nonlinear WFA model~(NL-WFA) based on neural networks, along with a learning algorithm. In contrast with WFA, the computation of a NL-WFA relies on \emph{successive compositions of nonlinear mappings}.
This model can be seen as an extension of dynamical recognizers~\citep{moore1997dynamical} --- which are in some sense a nonlinear
extension of deterministic finite automata --- to the quantitative setting.
In contrast with the training of recurrent neural networks~(RNN), our learning algorithm does not rely on back-propagation through time. It is inspired
by the spectral learning algorithm for WFA, which can be seen as a two-step process: first find a low-rank factorization of the so called \emph{Hankel matrix}
leading to a natural embedding of the set of words into a low-dimensional vector space, and then perform regression in this representation space to recover the transition matrices. 
Similarly, our learning algorithm first finds a nonlinear factorization of the Hankel matrix using an auto-encoder network, thus learning a rich nonlinear representation
of the set of strings, and then performs  nonlinear regression using a feed-forward network to recover the transition operators in the representation space.

\textbf{Related works.}
NL-WFA and RNN are closely related: their computation relies on the composition of nonlinear mappings directed by a sequence of observations. In this paper, we explore a somehow orthogonal direction to the recent RNN literature by trying to connect such models back with classical computational models from formal language theory. Such connections have been explored in the past in the non-quantitative setting with dynamical recognizers~\citep{moore1997dynamical}, whose inference has been studied in e.g.~\citep{pollack1991induction}. The ability of RNN to learn classes of formal languages has also been investigated, see e.g.~\citep{avcu2017subregular} and references therein. 
It is well know that predictive state representations (PSR)~\citep{littman2002predictive} are strongly related with WFA~\citep{thon2015links}. 
A nonlinear extension of PSR has been proposed for deterministic controlled dynamical systems in~\citep{rudary2004nonlinear}. More recently, building upon reproducing kernel Hilbert space embedding of PSR~\citep{boots2013hilbert}, non-linearity is introduced  into PSR using recurrent neural networks~\citep{downey2017predictive,venkatraman2017predictive}. 
One of the main differences with these approaches is that our learning algorithm does not rely on back-propagation through time and we instead investigate how the spectral learning method for WFA  can be beneficially extended to the nonlinear setting. 

\section{Preliminaries} 
We first introduce notions on weighted automata and the spectral learning method.

\subsection{Weighted finite automaton} Let $\Sigma^*$ denote the set of strings over a finite alphabet $\Sigma$ and let $\lambda$ be the empty word. A \emph{weighted finite 
automaton}~(WFA) with $k$ states is a tuple $A=\langle \bm{\alpha}_0, \bm{\alpha}_\infty, \{\textbf{A}_\sigma\}\rangle$ where $ \bm{\alpha}_0,\bm{\alpha}_\infty\in\R^k$ are the initial and
final weight vector respectively, and $\mat{A}_\sigma\in\R^{k\times k}$ is the transition matrix for
each symbol $\sigma\in\Sigma$. A WFA computes a function $f_A:\Sigma^*\to\R$ defined for each
word $x=x_1x_2\cdots x_n\in\Sigma^*$ by 
$$f_A(x)= \bm{\alpha}_0^{\top} \mat{A}_{x_1}\mat{A}_{x_2}\cdots\mat{A}_{x_n}\bm{\alpha}_\infty.$$
By letting $\mat{A}_x= \mat{A}_{x_1}\mat{A}_{x_2}\cdots\mat{A}_{x_n}$ for any word $x=x_1x_2\cdots x_n\in\Sigma^*$ we will 
often use the shorter notation $f_A(x)=\vecs{\alpha}_0^{\top} \mat{A}_{x}\vecs{\alpha}_\infty.$
A WFA $A$ with $k$ states is \emph{minimal} if its number of states is minimal, i.e., any WFA $B$ such that
$f_A=f_B$ has at least $k$ states.
A function $f:\Sigma^*\to\R$ is \emph{recognizable} if it can be computed by a WFA. In this case  
the \emph{rank} of $f$ is the number of states of a minimal WFA computing $f$. If $f$ is not 
recognizable we let $\rank(f)=\infty$.

\subsection{Hankel matrix}
The \emph{Hankel matrix} $\mat{H}_f\in\R^{\Sigma^*\times\Sigma^*}$ associated with a function  $f:\Sigma^*\to\R$
is the bi-infinite matrix with entries $(\mat{H}_f)_{u,v}=f(uv)$ for all words $u,v\in\Sigma^*$.
The spectral learning algorithm for WFA relies on the following fundamental relation between the rank of $f$ and the rank of the Hankel matrix $\mat{H}_f$~\citep{carlyle1971realizations,fliess1974matrices}:
\begin{theorem}
For any $f:\Sigma^*\to\R$, $\rank(f)=\rank(\mat{H}_f)$.
\end{theorem}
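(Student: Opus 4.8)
The plan is to establish the two inequalities $\rank(\mat{H}_f)\le\rank(f)$ and $\rank(f)\le\rank(\mat{H}_f)$ separately; with the convention that these ranks equal $\infty$ when infinite, the equality follows, the infinite cases being handled by contraposition (each inequality, read contrapositively, forces infiniteness of one rank from infiniteness of the other).

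For $\rank(\mat{H}_f)\le\rank(f)$, I would assume $f=f_A$ for a WFA $A=\langle\bm{\alpha}_0,\bm{\alpha}_\infty,\{\mat{A}_\sigma\}\rangle$ with $k$ states and exploit the factored form $f(uv)=\bm{\alpha}_0^\top\mat{A}_u\mat{A}_v\bm{\alpha}_\infty$. Defining the forward map $\mat{P}\in\R^{\Sigma^*\times k}$ whose row indexed by $u$ is $\bm{\alpha}_0^\top\mat{A}_u$ and the backward map $\mat{S}\in\R^{k\times\Sigma^*}$ whose column indexed by $v$ is $\mat{A}_v\bm{\alpha}_\infty$, one gets $\mat{H}_f=\mat{P}\mat{S}$, hence $\rank(\mat{H}_f)\le k$; taking $A$ minimal gives the bound. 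This direction is essentially immediate.

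For $\rank(f)\le\rank(\mat{H}_f)$, let $r$ denote $\rank(\mat{H}_f)$ and assume $r<\infty$ (the case $f\equiv 0$ being trivial, with both ranks $0$). The idea is to reconstruct a $r$-state WFA from an arbitrary rank factorization $\mat{H}_f=\mat{P}\mat{S}$ with $\mat{P}\in\R^{\Sigma^*\times r}$ and $\mat{S}\in\R^{r\times\Sigma^*}$ both of rank $r$ (which exists since $\mat{H}_f$ has rank $r$; then $\mat{P}$ admits a left inverse and $\mat{S}$ a right inverse). For each $\sigma\in\Sigma$ introduce the shifted Hankel matrix $\mat{H}_\sigma$ with $(\mat{H}_\sigma)_{u,v}=f(u\sigma v)$, and observe that its rows form a sub-collection of the rows of $\mat{H}_f$ (the row of $\mat{H}_\sigma$ indexed by $u$ equals the row of $\mat{H}_f$ indexed by $u\sigma$), and likewise for its columns; consequently $\mat{H}_\sigma=\mat{P}\mat{A}_\sigma\mat{S}$ for a unique $\mat{A}_\sigma\in\R^{r\times r}$, obtained by composing $\mat{H}_\sigma$ with the one-sided inverses of $\mat{P}$ and $\mat{S}$. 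Setting $\bm{\alpha}_0^\top$ to be the row of $\mat{P}$ indexed by $\lambda$ and $\bm{\alpha}_\infty$ the column of $\mat{S}$ indexed by $\lambda$, I would then prove by induction on $|x|$ that $\mathbf{e}_u^\top\mat{P}\mat{A}_x=\mathbf{e}_{ux}^\top\mat{P}$ for all $u,x\in\Sigma^*$, the single inductive step reducing to the identity $\mathbf{e}_w^\top\mat{P}\mat{A}_\sigma\mat{S}=\mathbf{e}_{w\sigma}^\top\mat{P}\mat{S}$ (the row identification above) followed by cancelling $\mat{S}$ on the right. Specialising to $u=\lambda$ yields $\bm{\alpha}_0^\top\mat{A}_x\bm{\alpha}_\infty=\mathbf{e}_x^\top\mat{P}\mat{S}\mathbf{e}_\lambda=f(x)$, so the constructed WFA has $r$ states and computes $f$, hence $\rank(f)\le r$.

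The main obstacle is the second direction: one must make the shifted Hankel matrices factor \emph{consistently} through the \emph{same} $\mat{P}$ and $\mat{S}$ rather than through arbitrary rank-$r$ factors, and then check that the $\mat{A}_\sigma$ so defined really do propagate the factorization along words — this is exactly where the Hankel structure (that $\mat{H}_\sigma$ is literally assembled from shifted rows and columns of $\mat{H}_f$) is indispensable. A minor point to handle with care is to work directly with the $\lambda$-indexed row and column of $\mat{P}$ and $\mat{S}$, rather than assuming the $\lambda$-indexed row of $\mat{H}_f$ belongs to some chosen basis, so that the argument goes through without extra assumptions.
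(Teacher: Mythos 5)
Your proof is correct, and it follows essentially the same route the paper relies on: the easy inequality via the factorization $\mat{H}_f=\mat{PS}$ induced by a minimal WFA, and the converse by reversing an arbitrary rank factorization with $\mat{A}_\sigma=\mat{P}^+\mat{H}_\sigma\mat{S}^+$, exactly as sketched in the paper's Section on spectral learning (the paper itself only cites the theorem, deferring to \citet{carlyle1971realizations,fliess1974matrices} and Lemma 4.1 of \citet{balle2014spectral}). Your inductive verification that $\mat{P}_{u,:}\mat{A}_x=\mat{P}_{ux,:}$, using the identification of the rows and columns of $\mat{H}_\sigma$ with shifted rows and columns of $\mat{H}_f$, is precisely the standard argument filled in correctly.
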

In practice, one deals with finite sub-blocks of the Hankel matrix. 
Given a basis $\mathcal{B}=(\mathcal{P}, \mathcal{S})\subset \Sigma^*\times \Sigma^*$, where $\mathcal{P}$ is a set of \emph{prefixes} and $\mathcal{S}$ is a set of \emph{suffixes},
we denote the corresponding sub-block of the Hankel matrix by $\mat{H}_\mathcal{B}\in\R^{\mathcal{P}\times \mathcal{S}}$.
Among all possible basis, we are particularly interested in the ones with the same rank as $f$. We say that a basis is \emph{complete} if $rank(\textbf{H}_\mathcal{B})=rank(f)=rank(\textbf{H}_f)$.

For an arbitrary basis $\mathcal{B}=(\mathcal{P}, \mathcal{S})$, we define its \emph{p-closure} by $\mathcal{B}^\prime=(\mathcal{P^\prime}, \mathcal{S})$, 
where $\mathcal{P^\prime}=\Pcal\cup \mathcal{P}\Sigma$.
 It turns out that a Hankel matrix over a p-closed basis can be partitioned 
into $|\Sigma| + 1$ blocks of the same size~\citep{balle2014spectral}: 
\begin{displaymath}
\textbf{H}_{\mathcal{B}^\prime}^\top=[\textbf{H}_\lambda^\top| \textbf{H}_{\sigma_1}^\top|\cdots| \textbf{H}_{\sigma_{|\Sigma|}}^\top]
\end{displaymath}
where for each $\sigma\in\Sigma\cup\{\lambda\}$ the matrix $\mat{H}_{\sigma}\in\R^{\mathcal{P}\times \mathcal{S}}$ is defined by
$(\mat{H}_{\sigma})_{u, v}=f(u\sigma v)$.

\subsection{Spectral learning}
It is easy to see that the rank of the Hankel matrix $\mat{H}_f$ is upper bounded by the rank of $f$: if
$A= \langle \bm{\alpha}_0^{\top}, \bm{\alpha}_\infty, \{\textbf{A}_\sigma\}\rangle$ is a WFA with $k$ states computing $f$, then
$\mat{H}_f$ admits the rank $k$ factorization $\mat{H}_f=\mat{PS}$ where the  matrices 
$\mat{P}\in\R^{\Sigma^*\times k}$ and $\mat{S}\in\R^{k\times \Sigma^*}$ are defined by $\mat{P}_{u,:}=\bm{\alpha}_0^{\top}\mat{A}_u$
and $\mat{S}_{:,v}= \mat{A}_v\bm{\alpha}_\infty$ for all $u,v\in\Sigma^*$. Moreover, one can check that 
$\mat{H}_{\sigma}=\mat{P}\mat{A}_\sigma\mat{S}$ for each $\sigma\in\Sigma$.
The spectral learning algorithm relies on the non-trivial observation that this construction can be reversed:
given any rank $k$ factorization $\mat{H}_\lambda=\mat{PS}$, the WFA  
$A=\langle \bm{\alpha}_0^{\top}, \bm{\alpha}_\infty, \{\textbf{A}_\sigma\}\rangle$ defined by
$$\bm{\alpha}_0^{\top}=\mat{P}_{\lambda,:},\ \ \bm{\alpha}_\infty=\mat{S}_{:,\lambda},\ \text{  and
}\textbf{A}_\sigma=\textbf{P}^+ \textbf{H}_\sigma \textbf{S}^+,$$ 
is a minimal WFA computing $f$~\citep[Lemma 4.1]{balle2014spectral}, where
$\mat{H}_\sigma$ for $\sigma\in\Sigma\cup{\lambda}$ denote the finite matrices defined above for a prefix closed complete
basis $\mathcal{B}$.

\section{Nonlinear Weighted Finite Automata}\label{sec:NLWFA}
The WFA model assumes that the transition operators $\mat{A}_\sigma$ are linear. It is natural to wonder whether this linear assumption
sometimes induces a too strong model bias~(e.g. if one tries to learn a function that is not recognizable by a WFA).  
Moreover, even for recognizable functions, introducing non-linearity
could potentially reduce the number of states needed to represent the function. Consider the following example: 
given a WFA $A=\langle \bm{\alpha}_0, \bm{\alpha}_\infty, \{\textbf{A}_\sigma\}\rangle$, the function $(f_A)^2:u\mapsto f_A(u)^2$
is recognizable and can be computed by the WFA $A^\prime=\langle \bm{\alpha}_0^\prime, \bm{\alpha}_\infty^\prime, \{\textbf{A}_\sigma^\prime\}\rangle$ 
with $\bm{\alpha}_0^\prime=\bm{\alpha}_0 \otimes \bm{\alpha}_0$, $\bm{\alpha}_\infty^\prime=\bm{\alpha}_\infty \otimes \bm{\alpha}_\infty$ 
and $\textbf{A}_\sigma^\prime=\textbf{A}_\sigma \otimes \textbf{A}_\sigma$, where $\otimes$ denotes Kronecker  product. 
One can check that if $\rank(f_A)=k$, then  $\rank(f_{A'})$ can be as large as $k^2$, but intuitively   the \emph{true dimension} of
the model is $k$ using non-linearity%
\footnote{By applying the spectral method on the component-wise square root of the Hankel matrix of $A'$, one would
recover the WFA $A$ of rank $k$.}.
These two observations motivate us to introduce \emph{nonlinear WFA} (NL-WFA).

\subsection{Definition of NL-WFA}
We will use the notation $\tilde{g}$ to stress that a function $g$ may be nonlinear. 
We define a NL-WFA $\tilde{A}$ of with k states as a tuple $\langle\bm{\alpha}_0, \tilde{G}_\lambda, \{\tilde{{G}}_{\sigma}\}_{\sigma\in\Sigma}\rangle$,
where $\bm{\alpha}_0\in\R^k$ is a vector of initial weights, $\tilde{{G}}_{\sigma}: \mathbb{R}^k \to \mathbb{R}^k$ is a transition function
for each  $\sigma\in\Sigma$ and $\tilde{G}_\lambda: \mathbb{R}^k \to \mathbb{R}$ is a termination function. 
A NL-WFA $\tilde{A}$ computes a function $f_{\tilde{A}}:\Sigma^*\to\R$ defined by
$$f_{\tilde{A}}(x)=\tilde{G}_{\lambda}(\tilde{G}_{x_t}(\cdots\tilde{G}_{x_2}(\tilde{G}_{x_1}(\vecs{\alpha}_0))\cdots))$$
for any word $ x= x_1 x_2\cdots x_t\in\Sigma^*$. 
Similarly to the linear case, we will sometimes use the shorthand notation $\tilde{G}_x = \tilde{G}_{x_t} 
\circ  \tilde{G}_{x_{t-1}} \circ \cdots \circ \tilde{G}_{x_1}$.
This nonlinear model can be seen as a generalization of dynamical recognizers~\citep{moore1997dynamical} to the quantitative setting. It is easy
to see that one recovers the classical WFA model by restricting the functions $\tilde{{G}}_{\sigma}$ and $\tilde{G}_\lambda$ to be linear.
Of course some restrictions on these nonlinear functions  have to be imposed in order to control the expressiveness 
of the model. In this paper, we consider nonlinear functions computed by neural networks.

\subsection{A Representation learning perspective on the spectral algorithm}
Our learning algorithm is inspired by the spectral learning method for WFA. In order to give some insights and further motivate our
approach, we will first  show how the spectral method can be interpreted as a representation learning scheme.

The spectral method can be summarized as a two-stages process consisting of a \emph{factorization step} and
a \emph{regression step}: first find a low rank factorization of the Hankel matrix and then perform regression
to estimate the transition operators $\{\A_\sigma\}_{\sigma\in\Sigma}$.

First focusing on the factorization step, let us observe that one can naturally embed the set of prefixes into the vector space $\R^\Scal$ by mapping each prefix $u$ to the corresponding row of the Hankel matrix $\H_{u,:}$. However, it is
easy to check that this representation is highly redundant when the Hankel matrix is of low rank.
In the factorization step of the spectral learning algorithm, the rank $k$ factorization $\mat{H}=\mat{PS}$ can be seen as finding a low dimensional representation $\mat{P}_{u, :}\in\R^k$ for each prefix $u$, from 
which the original \emph{Hankel representation} $\H_{u,:}$ can be recovered using  the linear map $\S$~(indeed $\H_{u,:} = \P_{u,:}\S$).
We can formalize this encoder-decoder perspective by defining two maps $\Psi_p: \mathcal{P} \mapsto \mathbb{R}^{k}$ and $\Psi_s: \mathbb{R}^k \mapsto \mathbb{R}^{\mathcal{S}}$ by
$\Psi_p(u)^\top = \P_{u,:}$ and $\Psi_s(\x)^\top = \vec{x}^\top \S$. One
can easily check that $\Psi_s(\Psi_p(u))^\top=\mat{H}_{u, :}$, which implies that 
$\Psi_p(u)$ encodes all the information sufficient to predict the value $f(uv)$ for any suffix $v\in\mathcal{S}$~(indeed $f(uv) =\Psi_p(u)^\top\mat{S}_{:,v}$).

The regression step of the spectral algorithms consists in recovering the matrices $\A_\sigma$ satisfying
$\H_\sigma = \P\A_\sigma\S$. From our encoder-decoder perspective,
this can be seen as recovering the compositional mappings
$\A_\sigma$ satisfying $\Psi_p(u\sigma)^\top =  \Psi_p(u)^\top \A_\sigma$ for each $\sigma\in\Sigma$. 

It follows from the previous discussion that non-linearity could be beneficially brought to WFA and into the spectral learning algorithm
 in two ways: first
by using nonlinear methods to perform the factorization of the Hankel matrix, thus discovering a potentially nonlinear
embedding of the Hankel representation, and second by allowing the compositional feature maps associated to each symbol to
be nonlinear.

\section{Learning NL-WFA}
Introducing non-linearity can be achieved in several ways. In this paper, we will use neural networks due to their ability to discover relevant nonlinear low-dimensional representation spaces 
and their expressive power as function approximators.

\subsection{Nonlinear factorization}
Introducing non-linearity in the factorization step boils down to finding two mappings $\Psi_p$ and $\Psi_s$ 
such that $\Psi_s(\Psi_p(u))=\mat{H}_{u,:}$ for any prefix $u \in \mathcal{P}$. 
Briefly going back to the linear case, one can check that if $\mat{H} = \mat{PS}$, then we have $\mat{H}_{u,:} = \mat{H}_{u,:}\mat{S}^+\mat{S}$ for each prefix $u$, implying that the 
encoder-decoder maps satisfy $\Psi_p(u)^\top = \mat{H}_{u,:}\mat{S}^+$ and $\Psi_s(\vec{x})^\top = \vec{x}^\top \mat{S}$.
Thus the factorization step can essentially be interpreted as finding an auto-encoder able to project down the Hankel representation $\H_{u,:}$ 
to a  low dimensional space while preserving the relevant information captured by $\H_{u,:}$.

How to extend the factorization step to the nonlinear setting should now appear clearly: by training an auto-encoder to learn a low-dimensional representation
of the Hankel representations $\H_{u,:}$, one will potentially unravel a rich representation of the set of prefixes from which a NL-WFA can be 
recovered. 

 Let $\tilde{\phi}:\mathbb{R}^{\mathcal{S}} \mapsto \mathbb{R}^k$ and $\tilde{\phi}':\R^k\to\R^\Scal$ be the encoder and decoder maps respectively.
 We will train the auto-encoder shown in Figure~\ref{fig:NN-factorization} (left) to achieve
$$\tilde{\phi}^{\prime}(\tilde{\phi}(\mat{H}_{u, :})) \simeq \H_{u,:}.$$
More precisely, if $\textbf{H} \in \mathbb{R}^{m \times n}$, the model is trained to map the original Hankel representation $\mat{H}_{u, :} \in \mathbb{R}^{n}$ of
each  prefix $u$ to a latent representation vector in $\mathbb{R}^k$, where $k \ll n$, and then map this vector back to the original representation $\mat{H}_{u, :}$.
This is achieved by minimizing the reconstruction error~(i.e. the $\ell_2$ distance between the original representation and its reconstruction).
Instead of linearly factorizing the Hankel matrix, we use an auto-encoder framework consisting of two networks, whose hidden layer activation functions are 
nonlinear%
\footnote{We use the (component-wise) $\mathrm{tanh}$ function in our experiments.}. 

\begin{figure}[t]
\begin{center}
  \includegraphics[width=0.4\textwidth]{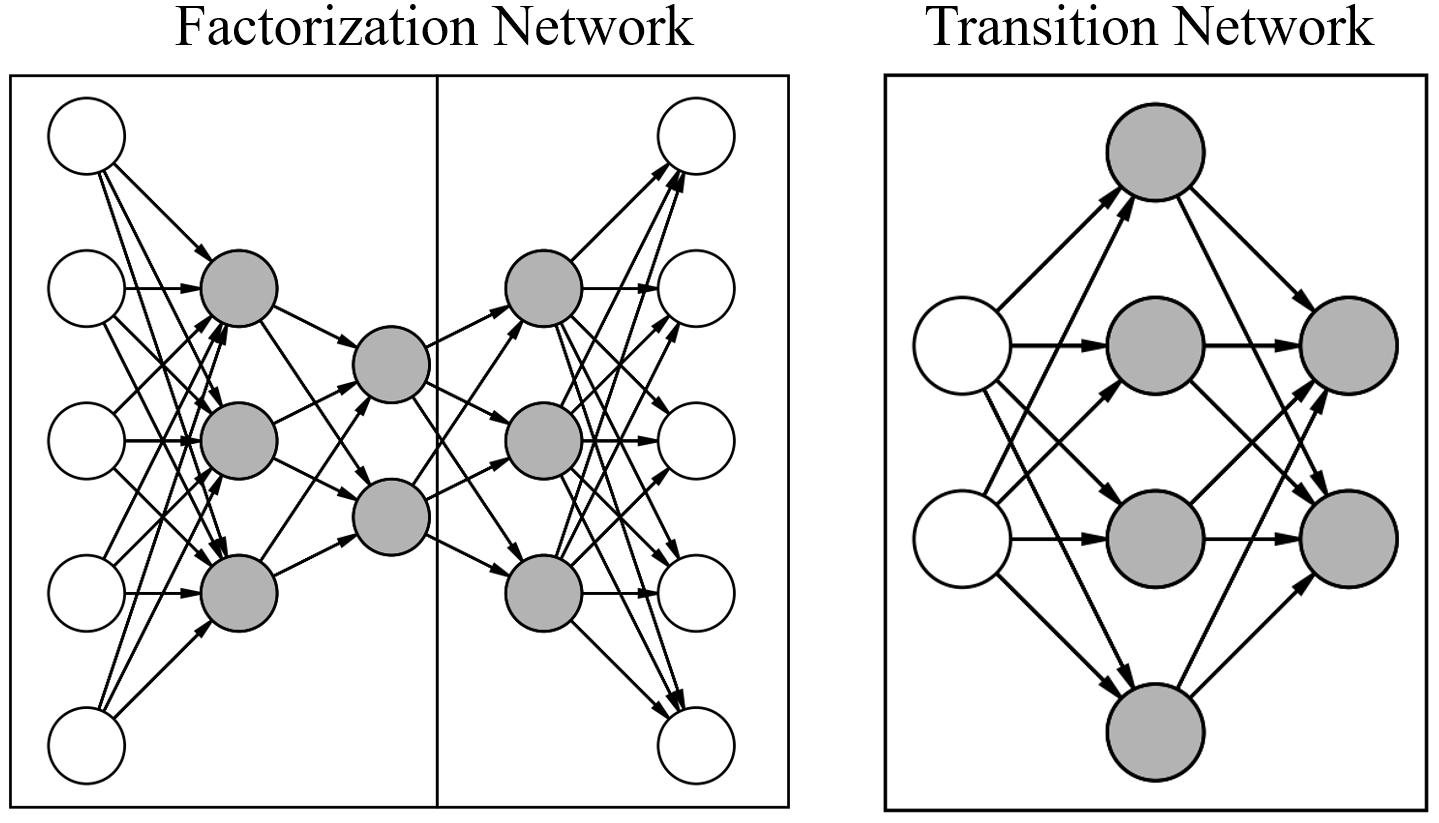}
  \caption{Factorization network and transition network:  grey units are nonlinear while white ones are linear.}
  \label{fig:NN-factorization}
\end{center}
\end{figure}

More precisely, if we denote the nonlinear activation function by $\theta$, and we let $\textbf{A}$, $\textbf{B}$, $\textbf{C}$, $\textbf{D}$ be the weights matrices from the left to the right of the neural net shown 
in Figure~\ref{fig:NN-factorization}~(left),
the function $\hat{f}:\mathbb{R}^n \to \mathbb{R}^n$ computed by the auto-encoder can be written as
$$\hat{f}=\tilde{\phi^{\prime}}\circ\tilde{\phi}: \mat(\mat{H})_{u, :}^\top \mapsto \theta(\theta(\theta(\textbf{H}_{u, :}^\top \mat{A})^\top\textbf{B})^\top\mat{C})^\top\textbf{D}$$
where the encoder-decoder functions $\tilde{\phi}: \mathbb{R}^n \to \mathbb{R}^k$ and $\tilde{\phi^{\prime}}: \mathbb{R}^k \to \mathbb{R}^n$ are
defined by $\tilde{\phi}(\vec{x})^\top=\theta(\theta(\vec{x}^\top \textbf{A})^\top\textbf{B})$ and $\tilde{\phi^{\prime}}(\vec{h})^\top=\theta({\vec{h}^\top}\textbf{C})^\top\textbf{D}$
for vectors  $\vec{x}\in \R^n, \vec{h}\in\R^k$.

It is  easy to check that if the activation function $\theta$ is the identity, one will exactly recover a rank $k$ factorization of the Hankel matrix, thus 
falling back onto the classical factorization step of the spectral learning algorithm. 

\subsection{Nonlinear regression}
Given the encoder-decoder maps  $\tilde{\phi}$ and $\tilde{\phi^{\prime}}$,
we then move on to recovering the transition functions. Recall that we wish to find the compositional feature maps $\tilde{G}_{\sigma}:\R^k\to\R^k$ for
each $\sigma$ satisfying $\Psi_p(u\sigma)=\tilde{G}_{\sigma}(\Psi_p(u))$ for all $u\in\Pcal$. 
Using the encoder map $\tilde{\phi}$ obtained in the factorization step, the mapping $\Psi_p$ can be written
as $\Psi_p(u) = \tilde{\phi}(\H_{u,:})$. 

In order to learn these transition maps, we will thus train one neural network for each symbol $\sigma$ to minimize the following squared error loss function
$$\sum_{u\in\Pcal}\lVert \tilde{G}_{\sigma}(\tilde{\phi}(\mat{H}_{u,:}))-\tilde{\phi}(\mat{H}_{u\sigma,:})\rVert^2.$$

The structure of the simple feed-forward network used to learn the transition maps is shown in Figure~\ref{fig:NN-factorization}~(right). 
Let $\mat{E}, \mat{F}$ be the two weights matrices, the function $\hat{g}: \mathbb{R}^k \to \mathbb{R}^k$ computed by this network can be written as
$$\hat{g}: \vec{h}^\top \mapsto \theta(\theta(\vec{h}^\top \mat{E})^\top \mat{F})$$
We want to point out that both  hidden units and  output units of this network are nonlinear. Since
this network will be trained to map between latent representations computed by the factorization network, the output
units of the transition network and the units corresponding to the latent representation in the factorization network should be
of the same nature to facilitate the optimization process.

\subsection{Overall learning algorithm}\label{sec:algo}
Let $(\Pcal,\Scal)\subset \Sigma^*\times \Sigma^*$ be a basis of suffixes and prefixes such that $\lambda\in\Pcal\cap\Scal$. Let $(\Pcal',\Scal)$ be its 
$p$-closure~(i.e. $\Pcal' = \Pcal\cup\Pcal\Sigma$) and let $m=|\Pcal'|,\ n=|\Scal|$. For reasons that will be clarified in the
next section, we assume that $\Pcal$ is prefix-closed~(i.e. for any $x\in\Pcal$, all prefixes of $x$ also belong to $\Pcal$).
The first step consists in building 
the estimate $\H\in\R^{m\times n}$ of the Hankel matrix  from the training data~(by using e.g. the empirical frequencies in the train set), where the rows of 
$\H$ are indexed by prefixes in $\Pcal'=\Pcal\cup \Pcal\Sigma$ and its columns by suffixes in $\Scal$. 
The learning algorithm for NL-WFA  then consists of two steps:
\begin{enumerate}
\item Train the factorization network to obtain a nonlinear decomposition of the Hankel matrix $\H$ through the
mappings $\tilde{\phi}:\R^n \to\R^k$ and $\tilde{\phi}':\R^k\to\R^n$ satisfying 
\begin{equation}\label{eq:factorization}
\tilde{\phi}'(\tilde{\phi}(\H_{u,:})) \simeq \H_{u,:}\ \ \text{for all }u\in\Pcal\cup\Pcal\Sigma.
\end{equation}

\item Train the transition networks for each symbol $\sigma\in\Sigma$ to learn the transition maps
$\tilde{G}_\sigma:\R^k\to\R^k$ satisfying
\begin{equation}\label{eq:transition}
\tilde{G}_{\sigma}(\tilde{\phi}(\mat{H}_{u,:})) \simeq \tilde{\phi}(\mat{H}_{u\sigma,:})\ \text{for all }u\in\Pcal.
\end{equation}
\end{enumerate}

The resulting NL-WFA is then given by $\tilde{A} = \langle \vecs{\alpha}_0, \tilde{G}_\lambda, \{\tilde{G}_\sigma\}_{\sigma\in\Sigma}\rangle$ where
$\bm{\alpha}_0=\tilde{\phi}(\mat{H}_{\lambda,:})$ and $\tilde{G}_\lambda$ is defined by
$$\tilde{G}_{\lambda}(\vec{x})=\bm{\lambda}^{\top} \tilde{\phi}^{\prime}(\vec{x}) \ \text{for all }\x\in\R^k$$
where $\bm{\lambda}$ is the one-hot encoding of the empty suffix $\lambda$.

%
%
%
%

\subsection{Theoretical analysis}

While the definitions of the initial vector $\vecs{\alpha}_0$ and termination function $G_\lambda$ given above may seem
\emph{ad-hoc}, we will now show that the learning algorithm we derived corresponds 
to minimizing an error loss function between $f_{\tilde{A}}(u)$ and the estimated value $\H_{u,\lambda}$ over
all prefixes in $\Pcal$. Intuitively, this means that our learning algorithm aims at minimizing the empirical squared error loss over the
training set $\Pcal\subset \Sigma^*$. More formally, we show in the following theorem that if both the factorization network and the transition
networks are trained to optimality~(i.e. they both achieve $0$ training error), then the resulting NL-WFA  exactly recovers the values given in
the first column of the estimate of the  Hankel matrix. 

\begin{theorem}\label{thm:zero.loss}
If the prefix set $\Pcal$ is prefix-closed and if equality holds in Eq.~\eqref{eq:factorization} and Eq.~\eqref{eq:transition}, then
the NL-WFA $\tilde{A} = \langle \vecs{\alpha}_0, \tilde{G}_\lambda, \{\tilde{G}_\sigma\}_{\sigma\in\Sigma}\rangle$, where
$\bm{\alpha}_0=\tilde{\phi}(\mat{H}_{\lambda,:})$ and $\tilde{G}_\lambda: \vec{x} \mapsto \bm{\lambda}^{\top} \tilde{\phi}^{\prime}(\vec{x})$,
is such that $f_{\tilde{A}}(u) = \H_{u,\lambda}$ for all $u\in\Pcal$.
\end{theorem}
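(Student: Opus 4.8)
The plan is to prove, by induction on the length of $u$, the key identity
$$\tilde{G}_u(\vecs{\alpha}_0) = \tilde{\phi}(\mat{H}_{u,:}) \quad\text{for every } u\in\Pcal,$$
where $\tilde{G}_u = \tilde{G}_{u_t}\circ\cdots\circ\tilde{G}_{u_1}$ denotes the composition of the transition maps along $u$, with the convention that it is the identity on $\R^k$ when $u=\lambda$. (This identity map must not be confused with the termination function, which is also written $\tilde{G}_\lambda$; I would flag this notational clash at the outset.) In words, the identity states that the latent state reached by $\tilde{A}$ after reading $u$ is precisely the encoding $\tilde{\phi}(\mat{H}_{u,:})$ of the Hankel row indexed by $u$. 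Granting it, the theorem follows immediately.

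For the base case $u=\lambda$, the left-hand side is $\vecs{\alpha}_0$ by the identity-map convention, and this equals $\tilde{\phi}(\mat{H}_{\lambda,:})$ by the very definition of the initial vector of $\tilde{A}$. For the inductive step I would write $u = u'\sigma$ with $\sigma\in\Sigma$; since $\Pcal$ is prefix-closed and $u'$ is a prefix of $u\in\Pcal$, we get $u'\in\Pcal$, so the induction hypothesis applies and gives $\tilde{G}_{u'}(\vecs{\alpha}_0) = \tilde{\phi}(\mat{H}_{u',:})$. Applying $\tilde{G}_\sigma$ to both sides and invoking the transition equality in Eq.~\eqref{eq:transition} at the prefix $u'\in\Pcal$ (which is exactly the regime in which that equation is assumed to hold, and for which the row $\mat{H}_{u'\sigma,:}$ is well defined since $u'\sigma\in\Pcal\Sigma\subseteq\Pcal'$) yields
$$\tilde{G}_u(\vecs{\alpha}_0) = \tilde{G}_\sigma\big(\tilde{\phi}(\mat{H}_{u',:})\big) = \tilde{\phi}(\mat{H}_{u'\sigma,:}) = \tilde{\phi}(\mat{H}_{u,:}),$$
which closes the induction.

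To conclude, fix $u\in\Pcal$. Unfolding the definition of $f_{\tilde{A}}$ and substituting the key identity gives $f_{\tilde{A}}(u) = \tilde{G}_\lambda\big(\tilde{G}_u(\vecs{\alpha}_0)\big) = \tilde{G}_\lambda\big(\tilde{\phi}(\mat{H}_{u,:})\big)$, where now $\tilde{G}_\lambda$ is the termination function $\x\mapsto\bm{\lambda}^\top\tilde{\phi}'(\x)$. Since $u\in\Pcal\subseteq\Pcal\cup\Pcal\Sigma$, the factorization equality in Eq.~\eqref{eq:factorization} applies, so this equals $\bm{\lambda}^\top\tilde{\phi}'\big(\tilde{\phi}(\mat{H}_{u,:})\big) = \bm{\lambda}^\top\mat{H}_{u,:}$, and because $\bm{\lambda}$ is the one-hot encoding of the empty suffix $\lambda\in\Scal$, this inner product selects the entry $\mat{H}_{u,\lambda}$, which is the claim.

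The argument is a routine structural induction, and I do not expect a substantive obstacle. The only care required --- and the only place the hypotheses are genuinely used --- is bookkeeping: distinguishing the two roles of the symbol $\tilde{G}_\lambda$, applying the factorization and transition equalities only at indices inside their stated ranges, and relying on prefix-closedness of $\Pcal$ together with the $p$-closure $\Pcal'=\Pcal\cup\Pcal\Sigma$ to ensure that every Hankel row invoked along the way is actually a row of the estimate $\mat{H}$.
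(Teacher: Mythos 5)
Your proof is correct and follows essentially the same route as the paper's: a structural induction establishing $\tilde{G}_u(\vecs{\alpha}_0)=\tilde{\phi}(\mat{H}_{u,:})$ for $u\in\Pcal$ via prefix-closedness and Eq.~\eqref{eq:transition}, followed by an application of Eq.~\eqref{eq:factorization} and the one-hot vector $\bm{\lambda}$ to extract $\mat{H}_{u,\lambda}$. The only (cosmetic) difference is that you base the induction at $u=\lambda$ with the empty-composition convention, whereas the paper starts at single symbols $u=\sigma$ using $\lambda\in\Pcal$; your explicit handling of the $\tilde{G}_\lambda$ notational clash is a welcome clarification.
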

\begin{proof}
We first show by induction on the length of a word $u=u_1u_2\cdots u_t \in\Pcal$ that 
$$\tilde{G}_u(\vecs{\alpha}_0) = \tilde{G}_{u_t}(\tilde{G}_{u_{t-1}}(\cdots \tilde{G}_1(\vecs{\alpha}_0)\cdots)) = \tilde{\phi}(\H_{u,:}).$$
If $u=\sigma\in \Sigma$, using the fact that $\lambda\in\Pcal$ we have $\tilde{G}_\sigma(\vecs{\alpha}_0) = \tilde{G}_\sigma(\tilde{\phi}(\H_{\lambda,:})) = 
\tilde{\phi}(\H_{\sigma,:})$ by Eq.~\eqref{eq:transition}. Now if  $u=u_1u_2\cdots u_t \in\Pcal$, we can apply the induction hypothesis on 
$u_1u_2\cdots u_{t-1}$ (since $\Pcal$ is prefix-closed) to obtain
$ \tilde{G}_u(\vecs{\alpha}_0) =  \tilde{G}_{u_t}(\tilde{G}_{u_1\cdots u_{t-1}}(\vecs{\alpha}_0)) = \tilde{G}_{u_t}(\tilde{\phi}(\H_{u_1\cdots u_{t-1},:}))=
\tilde{\phi}(\H_{u,:})$ by Eq.~\eqref{eq:transition}.

To conclude, for any $u\in\Pcal$ we have
$f_{\tilde{A}}(u) = \tilde{G}_\lambda(\tilde{G}_u(\vecs{\alpha}_0))
=  \tilde{G}_\lambda(\tilde{\phi}(\H_{u,:})) 
= \bm{\lambda}^{\top} \tilde{\phi}^{\prime}(\tilde{\phi}(\H_{u,:}))
= \H_{u,:} \bm{\lambda}  = \H_{u,\lambda}$
by Eq.~\eqref{eq:factorization}.

\end{proof}

Intuitively, it follows that the learning algorithm described in Section~\ref{sec:algo} aims at minimizing the following loss function
\begin{align*}
J(\tilde{\phi},\tilde{\phi'}, \{\tilde{G}_\sigma\}_{\sigma\in\Sigma}) 
&= 
\sum_{u\in\Pcal} ( \vecs{\lambda}^\top \tilde{\phi}^\prime (\tilde{G}_u(\tilde{\phi}(\H_{\lambda,:})) - \H_{u,\lambda})^2 \\
&=
\sum_{u\in\Pcal} (f_{\tilde{A}}(u) - \hat{f}(u))^2
\end{align*}

where $\hat{f}(u)$ is the estimated value of the target function on the word $u$, and where the NL-WFA $\tilde{A}$ is a function of
the encoder-decoder maps $\tilde{\phi},\tilde{\phi}^\prime$ and of the transition maps $\tilde{G}_\sigma$ 
as described in Section~\ref{sec:algo}.

Even though Theorem~\ref{thm:zero.loss} seems to suggest that our learning algorithm is prone to over-fitting, this is not the case. Indeed, akin to the linear
spectral learning algorithm, the restriction on the number of states of the NL-WFA~(which corresponds to the size of the latent representation layer
in the factorization network) induces regularization and enforces the learning process to discriminate between signal and noise~(i.e. in practice, the
networks will not achieve $0$ error due to the bottleneck structure of the factorization network). 

\subsection{Applying non-linearity independently in the factorization and transition networks}
We have shown that non-linearity can be introduced into the two steps of our learning algorithm. We can thus consider three variants of this
algorithm where we either apply non-linearity in the factorization step only, in the regression step only, or in both steps.
It is easy to check that these three different settings correspond to three different NL-WFA models depending on whether the termination
function only is nonlinear, the transition functions only are nonlinear, or both the termination and transition functions are nonlinear.  
Indeed, recall that that a NL-WFA $\tilde{A}$ is defined as a tuple $\tilde{A} = \langle\bm{\alpha}_0, \tilde{G}_\lambda, \{\tilde{{G}}_{\sigma}\}_{\sigma\in\Sigma}\rangle$.
If no non-linearity are introduced in the factorization network, the termination function will have the form 
$$\tilde{G}_\lambda:\x\mapsto \vecs{\lambda}^\top \tilde{\phi}'(\x) = \vecs{\lambda}^\top \mat{D}^\top\mat{C}^\top\x$$
(using the notations from the previous sections), which is linear. Similarly, if no non-linearity are used in the
transition networks, the resulting maps $\tilde{G}_\sigma$ will be linear. 

One may argue that only applying non-linearity in the termination function $\tilde{G}_{\lambda}$ would not lead to an expressive 
enough model. 
However, it is worth noting that in this case, after the nonlinear factorization step, even though the transition functions are linear
they are operating on a nonlinear feature space. This is similar in spirit to the kernel trick, where a linear model is learned in a feature
space resulting from a nonlinear transformation of the initial input space.
Moreover, if we go back to the example of the squared function $(f_A)^2$ for some WFA $A$ with $k$ states~(see beginning of Section~\ref{sec:NLWFA}),
even though $(f_A)^2$ may have rank up to $k^2$, one can easily build a NL-WFA with $k$ states computing $(f_A)^2$ where only the termination function 
is nonlinear.

\begin{figure}[t]
\begin{center}
  \includegraphics[width=0.5\textwidth]{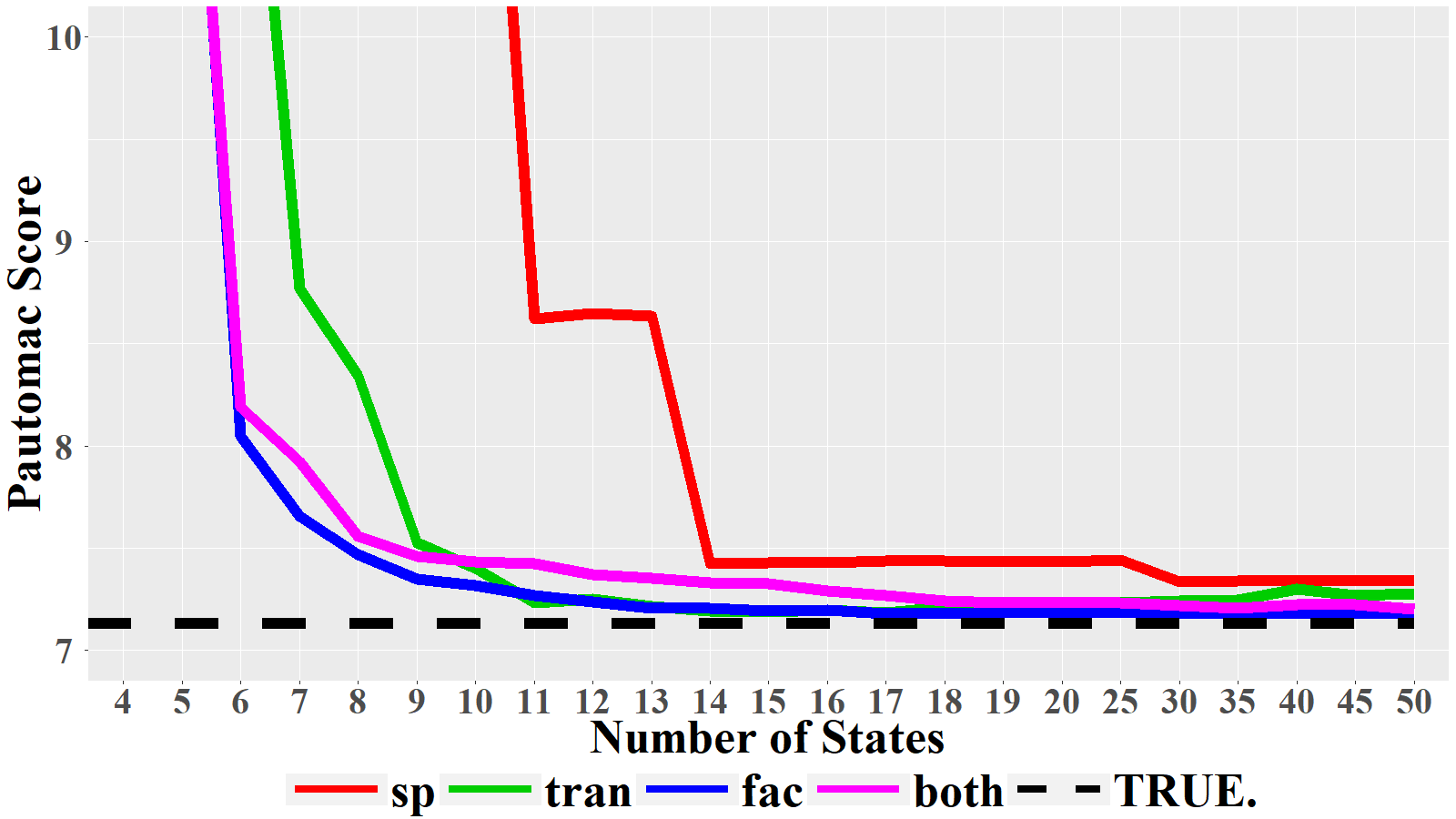}
  \caption{Pautomac score for the  Dyck language experiment for different  model sizes~(trained on a sample size of 20,000).}
  \label{fig:pauto1}
\end{center}
\end{figure}

\section{Experiments}
We compare the classical spectral learning algorithm with the three configurations of our neural-net based NL-WFA learning algorithms: 
applying non-linearity only in the factorization step~(denoted by \emph{fac.non}), only in the regression step~(denoted by \emph{tran.non}), and in both phases~(denoted by \emph{both.non}). 
We will perform experiments on a grammatical inference task~(i.e. learn a distribution over $\Sigma^*$ from samples drawn from this
distribution) with both synthetic and real data
\subsection{Metrics}
We use two metrics to evaluate the trained models on a test set: Pautomac score and word error rate.
\begin{itemize}
\item The Pautomac score was first proposed for the Pautomac challenge~\citep{verwer2014pautomac} and is defined by
$$\textit{Pauto}(M)=-2^{\sum_{x\in T}P_\ast(x)\log(P_M(x))}$$
where $P_M(x)$ is the normalized probability assigned to $x$ by the learned model and $P_\ast(x)$ is the normalized true 
probability~(both $P_M$ and $P_\ast$ are normalized to sum to $1$ over the test set $T$). Since the models returned by both our 
method and the spectral learning algorithm are not ensured to outputs positive values, while the logarithm of a negative value is not defined, we take the absolute values of all the negative outputs.

\item The word error rate~(WER) 
measures the percentage of incorrectly predicted symbols when, given each prefix of strings in the test set, the most likely next symbol is predicted.
\end{itemize}

\begin{figure}[t]
\begin{center}
  \includegraphics[width=0.5\textwidth]{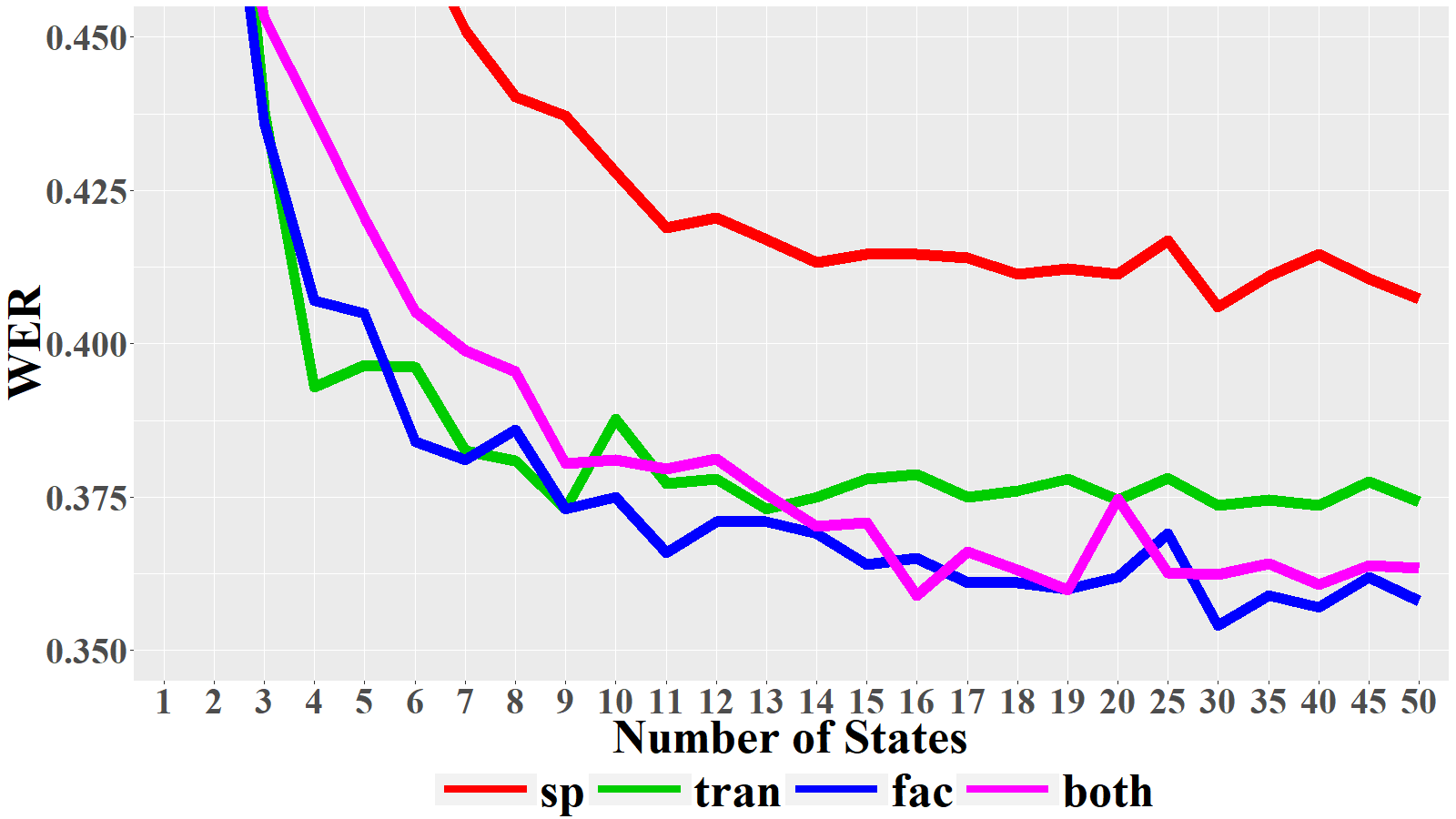}
  \caption{Word error rate for the  Dyck language experiment for different  model sizes~(trained on a sample size of 20,000).}
  \label{fig:wer1}
\end{center}
\end{figure}

\subsection{Synthetic data: probabilistic Dyck language}
For the synthetic data experiment, we generate data from a probabilistic Dyck language. Let $\Sigma=\{[, ]\}$, we consider the  language generated by the following probabilistic 
context free grammar
\begin{align*}
S &\to SS && \text{ with probability } 0.2\\
S &\to [S] && \text{ with probability } 0.4\\
S &\to [\ ] && \text{ with probability } 0.4\\
\end{align*}
i.e. starting from the symbol $S$, we draw one of the rules according to their probability and apply it to transform $S$ into the corresponding
right hand side; this process is repeated until no $S$ symbol are left.
One can check that this distribution will generate balanced strings of brackets. It is well known that this distribution cannot be computed by 
a WFA~(since its support is  a context free grammar). However, as a WFA can compute any distribution with finite support, it can model the restriction of this distribution to word of length less than
some threshold $N$. 
By using this distribution for our synthetic experiments, we want to showcase the fact that NL-WFA can lead to models with better predictive accuracy when the number of
states is limited and that they can better capture the complex structure of this distribution.

In our experiments, we use empirical frequencies in a training data set to estimate the 
 Hankel matrix $\mat{H}_\mathcal{B}\in\R^{1000\times 1000}$, where the p-closed basis $\mathcal{B}$ is obtained by selecting 
the $1,000$ most frequent prefixes and suffixes in the training data. 
We first assess the ability of NL-WFA to better capture the structure in the data when the number of states is limited.
We compared the models for different model sizes $k$ ranging from $1$ to $50$, where $k$ is the number of states of the learned
WFA and NL-WFA.
For the latter, we used a three hidden layers structure for the factorization network where the number of hidden units are set to $2k$, $k$ and $2k$. 
For the transition networks, we use a neural network with $2k$ hidden units\footnote{These hyper 
parameters are not finely tuned, thus some optimization might potentially improve the results.}. We used Adamax~\citep{kingma2014adam} with learning rate 0.015 and 0.001 respectively to train these two networks. 

\begin{figure}[t]
\begin{center}
  \includegraphics[width=0.5\textwidth]{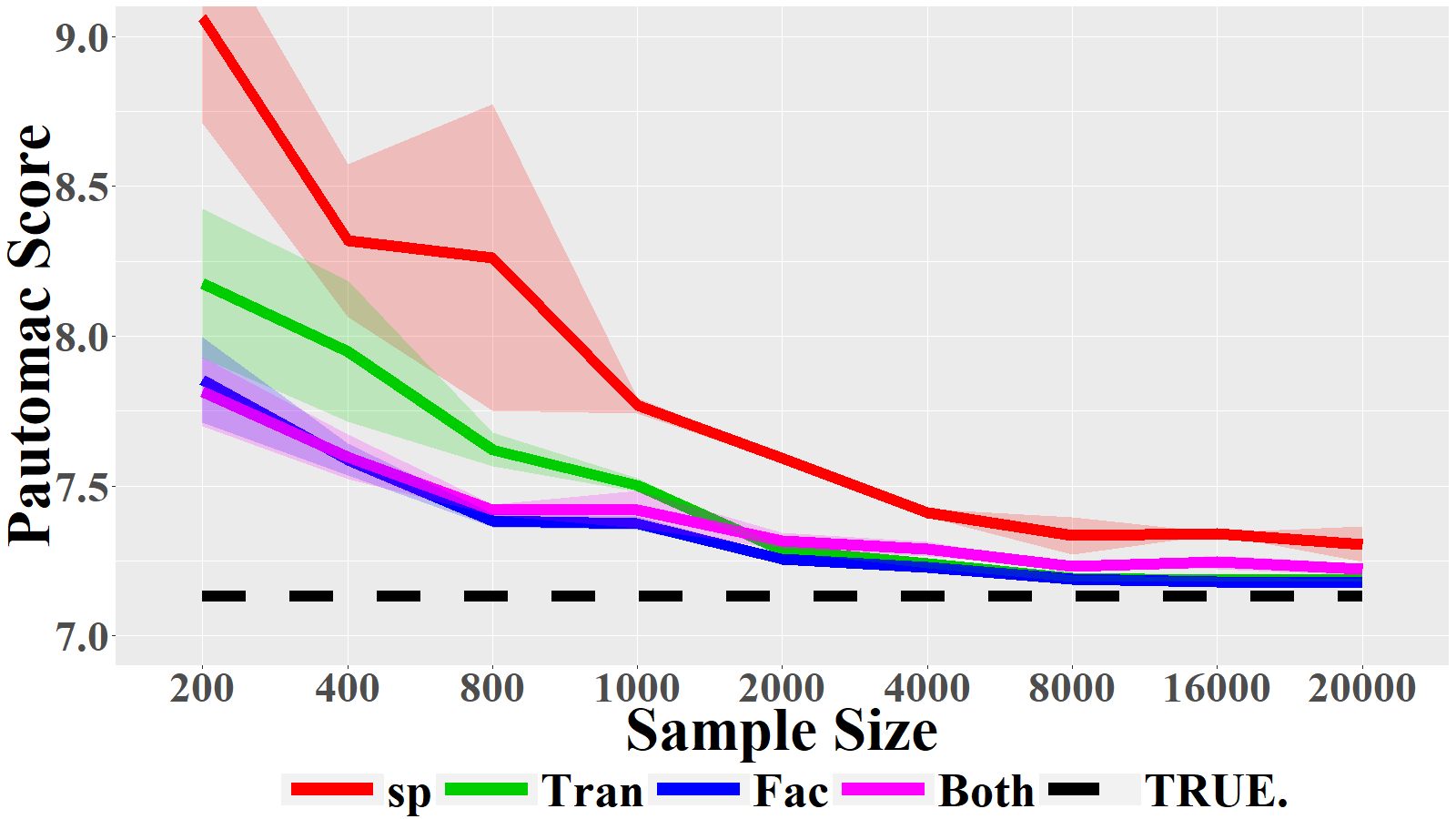}
  \caption{Average Pautomac score  for learning the Dyck language with different sample sizes.}
  \label{fig:pauto2}
\end{center}
\end{figure}

All models are trained on a training set of size $20,000$ and 
the Pautomac score and WER on a test set of size $250$ are reported  
in Figure~\ref{fig:pauto1} and~\ref{fig:wer1} respectively.
For both metrics, we see that NL-WFA gives better results for small model sizes. While NL-WFA and WFA tend to perform similarly for the Pautomac score
for larger model sizes, NL-WFA clearly outperforms WFA in terms of WER in this case.
This shows that including non-linearity can increase the prediction power of WFA by discovering the underlying nonlinear structure and can be beneficial when dealing with a small number of states.

We then compared the sample complexity of learning NL-WFA and WFA by training the different models on training set of sizes ranging from $200$ to $20,000$.
For all models the rank is chosen by cross-validation.
In Figure~\ref{fig:pauto2} and Figure~\ref{fig:wer2}, we show the performances for the four models on a test set of size $250$ by reporting the
average and standard deviation over $10$ runs of this experiment.
We can see that NL-WFA achieve better results on small sample sizes for the Pautomac score and consistently outperforms the linear model for all sample sizes for WER.
This shows that NL-WFA can use the training data more efficiently and again that the expressiveness of NL-WFA  is beneficial to this learning task.

\begin{figure}[t]
\begin{center}
  \includegraphics[width=0.5\textwidth]{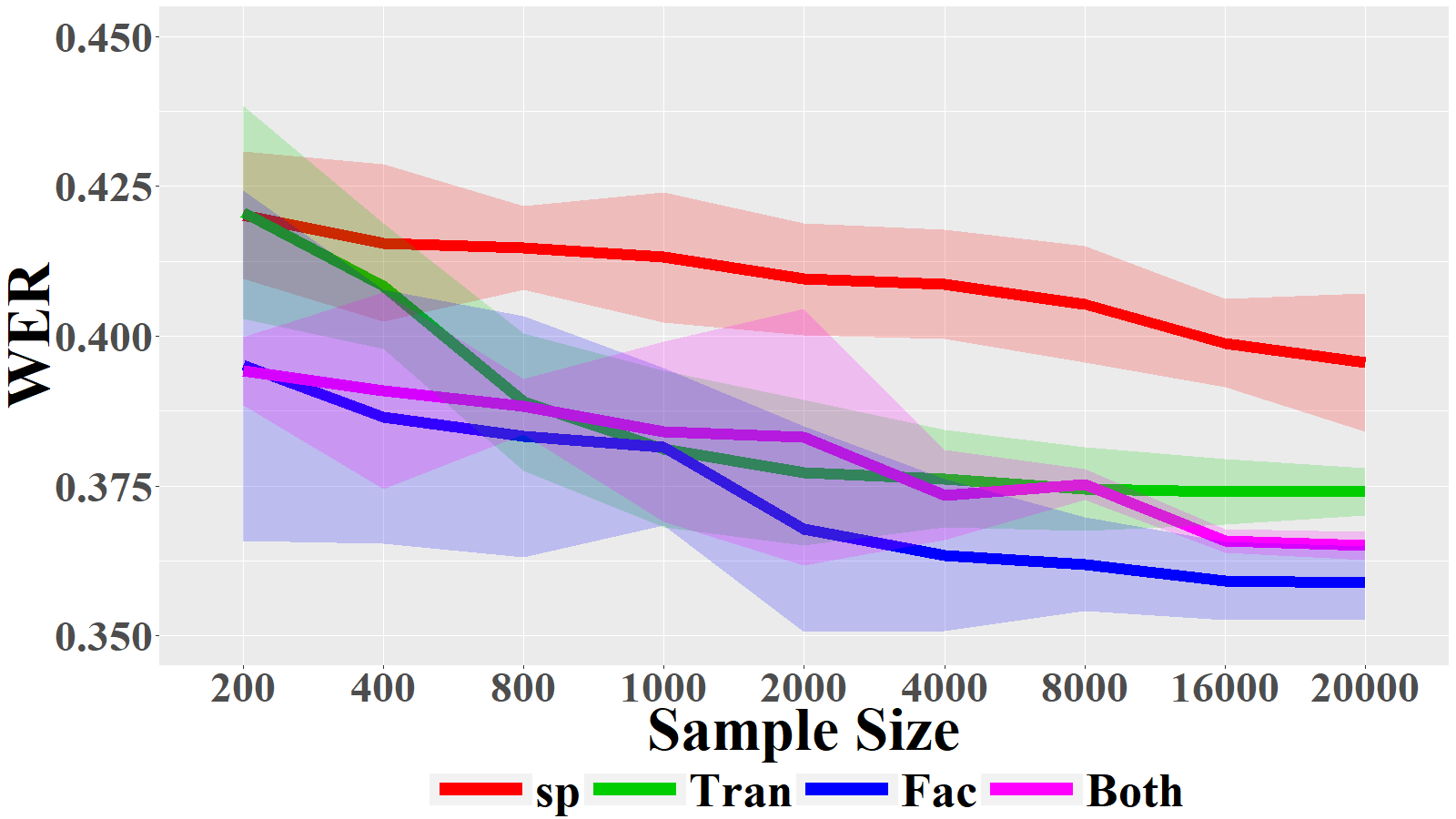}
  \caption{Average word error rate for learning the Dyck language with different sample sizes.}
  \label{fig:wer2}
\end{center}
\end{figure}
\subsection{Real data: Penn treebank}
The Penn Treebank~\citep{taylor2003penn} is a well known benchmark dataset for natural language processing. It consists of approximately
7 million words of part-of-speech tagged text, 3 million words
of skeletally parsed text, over 2 million words of text parsed for predicate argument
structure, and 1.6 million words of transcribed spoken text annotated
for speech disfluencies. In this experiment, we use a small portion of the Treebank dataset: the character level of English verbs
which was used in the SPICE challenge~\citep{balle2017results}. This dataset contains 5,987 sentences over an alphabet of 33 symbols as the training set. It also provides two test sets of size 750. We used one of the test sets as a validation set and then tested our models on the other.

For this experiment, the Hankel matrix $\H_{\cal B}$ is of size $3000\times 300$ where the prefixes and suffixes have been selected again by taking the most frequents in the training data.   We used a five layers factorization network where the layers are of size $4k$, $2k$, $k$, $2k$ and $4k$ respectively, where $k$ is the number of states of the NL-WFA. The structure of the transition networks is the same as in the previous experiment. For all models, the rank is selected using the validation set.

In Table~\ref{my-label}, we report the results for the two metrics on the test set. We can see that for both metrics, one of the NL-WFA models outperforms  linear spectral learning. Individually speaking, for modeling the distribution~(i.e. the perplexity metric) tran.non gives the best performances, while for the prediction task fac.non shows a significant advantage.

\begin{table}[H]
\centering
\resizebox{\columnwidth}{!}{
\begin{tabular}{lllll}
                & SP     & Tran.non & Fac.non & Both.non \\ \hline
log(Pauto)\tablefootnote{Since we do not have access to the true probabilities, $P_*$ is estimated using the  empirical frequencies in the test set.}       & 21.3807  & \textbf{12.2571}   & 13.8311  & 13.6604   \\ \hline
WER             & 0.8033 & 0.8841   & \textbf{0.7061}  & 0.8334  
\end{tabular}
}
\caption{Log perplexity and WER for real data}
\label{my-label}
\end{table}

\section{Discussion}
We believe that trying to combine models from formal languages theory~(such as weighted automata) and  models that have recently led to several successes in machine learning~(e.g. neural networks) is an exciting and promising line of research, both from the theoretical and practical sides. This work is a first step in this direction: 
we proposed a novel nonlinear weighted automata model along with a learning algorithm inspired by the spectral learning method for classical WFA. We showed that non-linearity can be introduced in two ways in WFA, in the termination function or in the transition maps, which directly translates into the two steps of our learning algorithm.

In our experiment, we showed on both synthetic and real data that (i) NL-WFA can lead to models with better predictive accuracy than WFA when the number of states is limited, (ii) NL-WFA are able to capture the complex underlying structure of challenging languages~(such as the Dyck language used in our experiments) and (iii) NL-WFA exhibit better sample complexity when learning on data with a complex grammatical structure.

In the future, we intend to investigate further the properties of NL-WFA from both the theoretical and experimental perspectives. For the former, one natural question is whether we could obtain learning guarantees for some specific classes of nonlinear functions. Indeed, one of the main advantages of the spectral learning algorithm is that it provides consistent estimators. While it may be  difficult  to obtain such guarantees when considering functions computed by neural networks, we believe that studying the case of more tractable nonlinear functions~(e.g. polynomials) could be very insightful. We also plan on thoroughly investigating connections between NL-WFA and RNN. From the practical perspective, we want to first tune the hyper-parameters for NL-WFA more extensively on the current datasets to potentially improve the results. In addition, we intend to run further experiments on real data and on different kinds of tasks beside language modeling~(e.g. classification, regression). Moreover, due to the strong connection between WFA and PSR, 
it will be very interesting to use NL-WFA in the context of reinforcement learning.

It is worth mentioning that the spectral learning algorithm cannot straightforwardly be  used to learn functions that are not probability distributions. Indeed, while it makes sense in the probabilistic setting to fill the entries corresponding to words that are not in the training data to $0$ in the Hankel matrix, it is not clear how to fill these entries when one wants to learn a function that is not a probability distribution, e.g. in a regression task. One way to circumvent this issue is to first use matrix completion techniques to fill these missing entries before performing the low rank decomposition of the Hankel matrix~\citep{balle2012spectral}. In contrast, our learning algorithm can directly be applied to this setting by simply adapting the loss function of the factorization network~(i.e. simply ignore the missing entries in the loss function).

\bibliographystyle{plainnat} 
\bibliography{biblio.bib}
\end{document}